\tikzstyle{overbrace text style}=[font=\tiny, above, pos=.5, yshift=5pt]
\tikzstyle{overbrace style}=[decorate,decoration={brace,raise=5pt,amplitude=3pt}]
\newtheorem{theorem}{Theorem}[section]
\newtheorem{corollary}[theorem]{Corollary}
\theoremstyle{definition}
\newtheorem*{comment*}{Comment}
\newtheorem{example}[theorem]{Example}
\newcommand{\cost}{\text{cost}}
\newcommand{\SC}{\text{SC}}
\newcommand{\MC}{\text{MC}}
\newcommand{\bw}{\mathbf{w}}
\newcommand{\bo}{\mathbf{o}}
\newcommand{\by}{\mathbf{y}}
\title{\bf On Truthful Constrained Heterogeneous \\ Facility Location with Max-Variant Cost}
\author[1]{Mohammad Lotfi}
\author[2]{Alexandros A. Voudouris}
\affil[1]{ Sharif University of Technology, Iran}
\affil[2]{University of Essex, UK}
\date{}
\begin{document}

\allowdisplaybreaks

\maketitle

\begin{abstract}
We consider a problem where agents have private positions on a line, and public approval preferences over two facilities, and their cost is the maximum distance from their approved facilities. The goal is to decide the facility locations to minimize the total and the max cost, while incentivizing the agents to be truthful. We design a strategyproof mechanism that is simultaneously $11$- and $5$-approximate for these two objective functions, thus improving the previously best-known bounds of $2n+1$ and $9$.
\end{abstract}

\section{Introduction} \label{sec:intro}
We consider the following truthful heterogeneous facility location problem with max-variant cost: There is a set $N$ of $n$ {\em agents} with private positions on the line of real numbers ($x_i$ for agent $i$), and public {\em approval preferences} $p_i \in \{0,1\}^2$ over two facilities $F_1$ and $F_2$ with $p_{i1}+p_{i2}\geq 1$ (i.e., each agent approves at least one facility). Let $N_j$ be the set of agents that approve facility $j \in [2]$. Clearly, $N_1$ and $N_2$ need not be disjoint since there might be agents that approve both facilities; we denote by $N_1 \setminus N_2$ the set of agents that approve only $F_1$, $N_2 \setminus N_1$ the set of agents that approve only $F_2$, and $N_1 \cap N_2$ the set of agents that approve both facilities. 

There is also a finite set $C$ of given {\em candidate locations}, where the facilities can be placed. 
For any agent $i$ and point $x$ of the line, let $d(i,x) = |x_i - x|$ denote the distance between the position of $i$ and $x$.
Given a {\em feasible solution} $\by = (y_1,y_2)$ consisting of the location $y_1$, where $F_1$ is placed, and the location $y_2 \neq y_1$, where $F_2$ is placed, the {\em individual cost} of an agent $i$ is her distance to the {\em farthest} facility among the ones she approves (hence, the term {\em max-variant cost} in the title of the problem), that is, 
$$\cost_i(\by) = \max_{j: p_{ij} = 1} d(i,y_j),$$
The {\em social cost} of a solution $\by = (y_1,y_2)$ is the total cost of all agents:
\begin{align*}
    \SC(\by) = \sum_{i \in N} \cost_i(\by).
\end{align*}
The {\em max cost} is the maximum cost over all agents:
\begin{align*}
    \MC(\by) = \max_{i \in N} \cost_i(\by).
\end{align*}
A {\em mechanism} takes as input the positions of the agents and, together with the public information about their preferences, decides a feasible solution, which we will typically denote by $\bw=(w_1,w_2)$ in our analysis. For any $f\in \{\SC,\MC\}$, the {\em approximation ratio} of a mechanism in terms of $f$ is the worst-case (over all possible instances) of the ratio $\frac{f(\bw)}{\min_\by f(\by)}$. Our goal is to design mechanisms that achieve a low {\em approximation ratio} in terms of both the social cost and the max cost simultaneously, while at the same time being {\em strategyproof}, that is, do not provide incentive to the agents to misreport their private positions and decrease their individual cost. 

\citet{Zhao2023constrained} were the first to consider this constrained heterogeneous facility location problem with max-variant cost; a version of the problem where the facilities can be placed at the same locations was previously studied by \citet{chen2020max} under different informational assumptions (known positions and private preferences).  \citeauthor{Zhao2023constrained} showed that there are strategyproof mechanisms which achieve approximation ratios of $2n+1$ and $9$ for the social cost and max cost objectives, respectively. For both objectives, the mechanisms of \citet{Zhao2023constrained} switch between cases depending on whether there is at least one agent that approves both facilities or not. When there is such an agent, the facilities are placed at the candidate locations that are closest to a designated agent (the median for the social cost and the leftmost for the max cost). When no agent approves both facilities, each facility is separately placed to the available candidate location that is closest to a designated agent among the ones that approve it (again, median for the social cost or leftmost for the max cost); for the social cost, the facility that is placed first is the one that the majority of agents approve. 

\subsection*{Our contribution}
In this work, we design a different strategyproof mechanism and show that it simultaneously achieves an approximation ratio of at most $11$ for the social cost and at most $5$ for the maximum cost, thus improving both bounds at the same time; for the social cost, the improvement is significant as the approximation decreases from linear to constant. Our mechanism works as follows. For simplicity, suppose that $F_1$ is the facility that most agents approve. The mechanism switches between two cases depending on whether the number of agents that approve $F_1$ is at least the number of agents that approve both facilities. If this is indeed the case, $F_1$ placed at the candidate location closest to the median among the set of agents that approve only it, while $F_2$ is placed at the available candidate location closest to the median agent among the ones that approve it (including the agents that approve both). Otherwise, the two facilities are placed at the two candidate locations that are closest to the median agent among those that approve both facilities. Our analysis uses a combination of simple tools (such as the triangle inequality and the fact that the median of a set of points minimizes their total distance to any point) and worst-case characterization which reveals the properties that worst-case instances (where the approximation ratio is maximized) have. These results are presented in Section~\ref{sec:results}. 

In Section~\ref{sec:connection}, we present an interesting connection between the max-variant truthful heterogeneous two-facility location problem (which is the main focus of this paper) and the sum-variant considered by \citet{kanellopoulos2023}. We show that any mechanism that is strategyproof in both variants and achieves an approximation ratio of at most $\rho$ in terms of the social cost or the maximum cost in one of the variants, also achieves an approximation ratio of at most $2\rho$ in the other variant. This allows us to show that a constant approximation ratio (at most $22$) can be achieved for the social cost in the sum-variant, which is the first constant bound for this version of the problem as well; \citet{kanellopoulos2023} previously claimed an upper bound of $3$ but with a mechanism that turned out to not be strategyproof. 

\subsection*{Other related work}
The seminal paper of \citet{procaccia09approximate} initiated the study of facility location problems under the prism of approximate mechanism design without money. Since their work, a plethora of different variants have been studied in the literature under various assumptions, always aiming to design strategyproof mechanisms with an as low approximation ratio as possible in terms of some social objective. Indicatively, different variants have been considered depending on the number of facilities to locate~\citep{Lu2010two-facility,fotakis2014two}, whether the facilities are desirable or obnoxious~\citep{cheng2013obnoxious}, whether the preferences of the agents are homogeneous or heterogeneous~\citep{feigenbaum2015hybrid,serafino2016,kanellopoulos2021discrete,deligkas2023limited,li2020constant,chen2020max}, and whether there are constraints about where the facilities can be placed~\citep{feldman2016voting,kanellopoulos2023,Xu2021minimum} in combination with min or sum individual costs. For a detailed exposition of the truthful facility location literature, we refer the interest reader to the survey of \citet{fl-survey}. 

\subsection*{Some useful notation and observations}
We will extensively use the {\em triangle inequality} stating that $d(x,y) \leq d(x,z) + d(z,y)$ for any three points $x,y,z$ of the line. We will sometimes use directly the following version of the triangle inequality (which can be shown by applying the classic one twice): $d(i,x) \leq d(i,y) + d(j,y) + d(j,x)$ for any two agents $i,j$ and locations $x,y$. 

When the objective (either the social cost or the maximum cost) is clear from context, we will denote by $\bo = (o_1,o_2)$ an optimal solution, that is, $\bo$ is a feasible solution that minimizes the objective. In addition, for any agent $i$, we will denote by $o(i)$ the location of the facility that determines the cost of $i$ in $\bo = (o_1,o_2)$. Observe that, for any $j \in [2]$, if $i \in N_j$, then $d(i,o_j) \leq d(i,o(i)) = \cost_i(\bo)$; this follows by the fact that either $o_j = o(i)$ or $o_j$ is closer to $i$ than $o(i) = o_{3-j}$. Finally, we will denote by $t(i)$ and $s(i)$ the candidate locations that are closest and second-closest to the position of agent $i$. Observe that, by definition, $d(i,t(i)) \leq d(i,o(i))$. 

\section{Results} \label{sec:results}
We present and analyze the mechanism for the case where $|N_1| \geq |N_2|$; for the case $|N_2| > |N_1|$, it suffices to swap $1$ and $2$ in the description below. 
\begin{itemize}
    \item \textbf{(Case 1)} If $|N_1 \setminus N_2| \geq |N_1 \cap N_2|$, then place $F_1$ at the location $t(m_1)$ that is closest to the median agent $m_1 \in N_1 \setminus N_2$, and place $F_2$ at the location that is closest to the median agent $m_2 \in N_2$ from the set of available locations; so, either $t(m_2)$ if $t(m_2) \neq t(m_1)$, or $s(m_2)$ if $t(m_2) = t(m_1)$.

    \item \textbf{(Case 2)} Otherwise, place $F_1$ at $t(m_{12})$ and $F_2$ at $s(m_{12})$, where $m_{12}$ is the median agent of $N_1 \cap N_2$.
\end{itemize}
We refer to this mechanism as {\sc Conditional-Median}.

\begin{theorem} \label{thm:sp}
{\sc Conditional-Median} is strategyproof. 
\end{theorem}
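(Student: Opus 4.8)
The plan is to use that the approval preferences $p_i$ are public: an agent can misreport only its position $x_i$, so the partition of $N$ into $N_1\setminus N_2$, $N_2\setminus N_1$, $N_1\cap N_2$, and therefore which of the two cases the mechanism executes, is the same no matter what any single agent reports. Fix an agent $i$ with true position $x_i$ and hold all other reports fixed. The workhorse is the standard fact that a ``median-then-nearest-candidate'' rule is strategyproof for an agent whose cost is its distance to a single facility: if some facility is placed at $h(\mathrm{med})$, where $h(\cdot)$ returns the nearest point of a \emph{fixed} finite set $C'\subseteq C$ and $\mathrm{med}$ is the median of a set of positions that includes $i$, then as $i$ varies its report the realized location ranges over $\{h(z):z\in[a,b]\}$ for an interval $[a,b]$ determined by the other agents, and reporting $x_i$ produces $h(\max(a,\min(x_i,b)))$. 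A short monotonicity argument (using that $h$ is non-decreasing and that $h(a)$, being the nearest point of $C'$ to $a$, admits no point of $C'$ strictly between $h(a)$ and $a$) shows that this is precisely the element of $\{h(z):z\in[a,b]\}$ closest to $x_i$, hence truthful reporting is optimal for that distance. I will record this as a lemma.

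Case 1 then reduces to it in all three subcases. If $i\in N_1\setminus N_2$, the cost of $i$ is $d(i,w_1)=d(i,t(m_1))$ with $m_1$ the median of $N_1\setminus N_2\ni i$, and $t(\cdot)$ is the nearest-candidate map for the fixed set $C$ — so the lemma applies verbatim. If $i\in N_2\setminus N_1$, the cost is $d(i,w_2)$; the key point is that $i\notin N_1\setminus N_2$, so $m_1$ and hence $w_1=t(m_1)$ do not depend on $i$'s report, and the placement rule for $F_2$ is exactly ``nearest point of the fixed set $C\setminus\{w_1\}$ to the median of $N_2$'' (this is why bumping to $s(m_2)$ exactly when $t(m_2)=t(m_1)$ is the correct rule), so the lemma applies with $C'=C\setminus\{w_1\}$. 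If $i\in N_1\cap N_2$, the cost is $\max\big(d(i,w_1),d(i,w_2)\big)$; again $w_1$ is unaffected by $i$, so minimizing the cost over $i$'s reports is the same as minimizing $d(i,w_2)$, which is the previous subcase.

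In Case 2 both facilities are placed at $t(m_{12})$ and $s(m_{12})$ for $m_{12}$ the median of $N_1\cap N_2$. If $i\notin N_1\cap N_2$, then $i$ cannot influence $m_{12}$, so neither facility moves and $i$ has no profitable deviation. The remaining — and main — subcase is $i\in N_1\cap N_2$: here the cost of $i$ is $\phi(m_{12}):=\max\big(d(i,t(m_{12})),d(i,s(m_{12}))\big)$, and as $i$ varies its report $m_{12}$ ranges over an interval $[a,b]$, with the truthful report yielding $m_{12}=\max(a,\min(x_i,b))$. The structural claim I will prove is that $\phi$ is a quasiconvex function of $z$ whose minimum value, equal to $d(i,s(x_i))$, is attained at $z=x_i$. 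Given this: if $x_i\in[a,b]$, truthful reporting realizes the global minimum; if $x_i<a$ (the case $x_i>b$ being symmetric), then $\phi$ is non-decreasing on $[a,b]$ because the valley of $\phi$ contains $x_i<a$, so the truthful report ($m_{12}=a$) is optimal. To prove the claim I will use that the two nearest candidates $\{t(z),s(z)\}$ to any point $z$ always form a pair of \emph{consecutive} candidates $\{c_{k(z)},c_{k(z)+1}\}$ in sorted order, with the index $k(z)$ non-decreasing in $z$; hence $\phi(z)=\max\big(|x_i-c_{k(z)}|,|x_i-c_{k(z)+1}|\big)$, and since the sequence $k\mapsto\max\big(|x_i-c_k|,|x_i-c_{k+1}|\big)$ is unimodal in $k$ (being the pointwise maximum of the two unimodal sequences $k\mapsto|x_i-c_k|$ and $k\mapsto|x_i-c_{k+1}|$), $\phi$ is unimodal in $z$; its global minimum equals $\phi(x_i)$ because every pair of distinct candidates has maximum distance to $x_i$ at least $d(i,s(x_i))$, with equality for the pair $\{t(x_i),s(x_i)\}$.

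I expect this last structural claim — for an agent in $N_1\cap N_2$ in Case 2 whose true position lies outside the achievable median interval — to be the only real obstacle: it needs the consecutive-pair description of $\{t(z),s(z)\}$, monotonicity of the pair index, and the resulting unimodality of $\phi$, together with a couple of short sub-arguments about where $x_i$ sits relative to the nearest pair. Every other situation collapses to the single-facility median lemma. I will also note in passing the mild degenerate cases (e.g.\ $N_2=\emptyset$, or $|C|=2$, where most facility positions are independent of the reports and strategyproofness is immediate).
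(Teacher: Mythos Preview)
Your proposal is correct and follows the same skeleton as the paper: observe that the public preferences fix which case the mechanism is in, then do a per-case analysis using the strategyproofness of median-based placement. The paper's own proof is considerably more terse---for each agent type it simply asserts that misreporting ``would either not change the location of the facilities or move them farther away''---whereas you actually supply the supporting arguments.

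The one place where your argument genuinely adds content over the paper is Case~2 for an agent $i\in N_1\cap N_2$. The paper dismisses this in a sentence, but the claim is not entirely obvious since $i$'s cost is $\max\big(d(i,t(m_{12})),d(i,s(m_{12}))\big)$ and \emph{both} facility locations move as $m_{12}$ moves. Your approach---writing $\{t(z),s(z)\}$ as a consecutive pair $\{c_{k(z)},c_{k(z)+1}\}$ with $k(z)$ monotone, and then showing the sequence $k\mapsto\max(|x_i-c_k|,|x_i-c_{k+1}|)$ is unimodal---is a clean way to make this precise. (Your justification ``pointwise max of two unimodal sequences'' is not literally a valid general principle, but here the two sequences are $f(k)$ and $f(k+1)$ for the same unimodal $f$, and a direct check shows their max is indeed unimodal; you may want to state it that way.) A minor phrasing issue in Case~1 for $i\in N_1\cap N_2$: minimizing $\max(c,d(i,w_2))$ with $c$ fixed is not literally ``the same as'' minimizing $d(i,w_2)$, but since the former is monotone in the latter your conclusion stands.
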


\begin{proof}
First observe that the two cases considered by the mechanism only depend on the preferences of the agents, which are publicly known. Hence, no agent can force the mechanism to go from (Case 1) to (Case 2) or vice versa by misreporting. We now discuss each case separately. 
\begin{itemize}
    \item \textbf{(Case 1)} Since $F_1$ is places at $t(m_1)$, the cost of $m_1$ is minimized. To change the location of $F_1$, an agent $i \in (N_1 \setminus N_2) \setminus \{m_1\}$ would have to misreport a position such that the median of $N_1 \setminus N_2$ changes, which would either not change the location of $F_1$ or move it farther away. Clearly, no agent of $N_2$ can affect the location of $F_1$, and, for the same reason, has no incentive to misreport. In particular, given the location of $F_1$, the cost of $m_2$ is minimized, and any other agent $i \in (N_2 \setminus N_1) \setminus \{m_2\}$ would have to change the median of $N_2$, which would either not change the location of $F_2$ or move it farther away. 

    \item \textbf{(Case 2)} No agent of $(N_1 \setminus N_2) \cup (N_2 \setminus N_1)$ can affect the outcome of the mechanism in this case. The cost of $m_{12}$ is clearly minimized, and no other agent $i \in (N_1 \cap N_2)\setminus \{m_{12}\}$ can misreport; similarly to (Case 1), $i$ would have to change the median of $N_1 \cap N_2$, which would either not change the location of the facilities or move them farther away. 
\end{itemize}
Consequently, in any of the two cases, no agent has incentive to misreport, and thus the mechanism is strategyproof. 
\end{proof}

\begin{theorem} \label{thm:SC}
For the social cost, the approximation ratio of {\sc Conditional-Median} is at most $11$. 
\end{theorem}

\begin{proof}
Let $\bw = (w_1,w_2)$ be the solution computed by the mechanism, and $\bo = (o_1,o_2)$ be an optimal solution. 
We consider each of the two cases of the mechanism separately. In each case, we combine known properties related to the median of a set of points on the line, the triangle inequality, and some with properties of worst-case instances (which achieve the largest possible approximation ratio).

\medskip

\noindent 
{\bf (Case 1).}
We consider two subcases depending on whether $m_2$ is closer to $w_2$ than to $o_2$. Observe that this is definitely true when $t(m_1) \neq t(m_2)$ and might be true when $t(m_1) = t(m_2)$. If it is not true when $t(m_1) = t(m_2)$, then, since $w_2 = s(m_2)$, we have more information about the structure of the instance, in particular, we have that $w_1=o_2$.
For any $j \in [2]$, let $S_j$ be the subset of $N_1 \cap N_2$ that includes agents whose cost in $\bw$ is determined by $w_j$.

\medskip

\noindent 
{\bf (Case 1.1): $d(m_2,w_2) \leq d(m_2,o_2)$.}  
The social cost of $\bw$ can be written as follows:
\begin{align*}
\SC(\bw) 
&= \sum_{i \in N_1 \setminus N_2} d(i,w_1) + \sum_{i \in S_1} d(i,w_1) + \sum_{i \in (N_2 \setminus N_1) \cup S_2} d(i,w_2) \\
&\leq \sum_{i \in N_1 \setminus N_2} d(i,w_1) + \sum_{i \in S_1} d(i,w_1) + \sum_{i \in N_2} d(i,w_2)
\end{align*}
For the agents in $N_1 \setminus N_2$, using the triangle inequality, the fact that $d(m_1,w_1) \leq d(m_1,o_1)$, $o_1 = o(i)$ for every $i \in N_1 \setminus N_2$, and the property of $m_1$ which minimizes the total distance of all agents in $N_1 \setminus N_2$ from any point on the line, including $o_1$, we obtain
\begin{align}
     \sum_{i \in N_1 \setminus N_2} d(i,w_1) 
     &\leq \sum_{i \in N_1 \setminus N_2} \bigg( d(i,m_1) + d(m_1,w_1) \bigg) \nonumber \\
     &\leq \sum_{i \in N_1 \setminus N_2} \bigg( d(i,m_1) + d(m_1,o(i)) \bigg) \nonumber \\
     &\leq \sum_{i \in N_1 \setminus N_2} \bigg( 2\cdot d(i,m_1) + d(i,o(i)) \bigg) \nonumber \\
     &\leq 3 \cdot \sum_{i \in N_1 \setminus N_2} d(i,o(i)). \label{eq:case1.1:N1}
\end{align}
For the agents in $S_1$, we have that $|S_1| \leq |N_1 \cap N_2| \leq |N_1 \setminus N_2|$. In addition, $d(i, o_1) \leq d(i,o(i))$ for every agent $i \in S_1$, $d(m_1,w_1) \leq d(m_1,o_1)$, and $o_1 = o(i)$ for any agent $i \in N_1 \setminus N_2$. So, using the triangle inequality and the property of $m_1$ which minimizes the total distance of all agents in $N_1 \setminus N_2$ from any point on the line, including $o_1$, we obtain
\begin{align}
    \sum_{i \in S_1} d(i,w_1) 
    &\leq \sum_{i \in S_1} \bigg( d(i,o_1) + d(m_1,o_1) + d(m_1,w_1) \bigg) \nonumber \\
    &\leq \sum_{i \in S_1} d(i,o(i)) + \sum_{i \in N_1 \setminus N_2} 2\cdot d(m_1,o_1) \nonumber \\
    &\leq \sum_{i \in S_1} d(i,o(i)) + \sum_{i \in N_1 \setminus N_2} \bigg( 2\cdot d(i,m_1) + 2\cdot d(i,o(i)) \bigg) \nonumber \\
    &\leq \sum_{i \in S_1} d(i,o(i)) + 4 \cdot \sum_{i \in N_1 \setminus N_2} d(i,o(i)). \label{eq:case1.1:S1}
\end{align}
For the agents in $N_2$, using the triangle inequality, the fact that $d(m_2,w_2) \leq d(m_2,o(i))$ for every $i \in N_2$, and the property of $m_2$ which minimizes the total distance of all agents in $N_2$ from any point on the line, including $o_2$,
we obtain
\begin{align}
     \sum_{i \in N_2 } d(i,w_2) 
     &\leq \sum_{i \in N_2} \bigg( d(i,m_2) + d(m_2,w_2) \bigg) \nonumber \\
     &\leq \sum_{i \in N_2} \bigg( d(i,m_2) + d(m_2,o(i)) \bigg) \nonumber \\
     &\leq \sum_{i \in N_2} \bigg( 2\cdot d(i,m_2) + d(i,o(i)) \bigg) \nonumber \\
     &\leq 3 \cdot \sum_{i \in N_2} d(i,o(i)) \label{eq:case1.1:N2}
\end{align}
Putting everything together, and using the facts that $N = (N_1 \setminus N_2) \cup N_2$ and $(N_1 \setminus N_2) \cup S_1 \subseteq N_1 \subseteq N$, we have
\begin{align*}
\SC(\bw) &\leq 7\cdot \sum_{i \in N_1 \setminus N_2} d(i,o(i)) 
+ \sum_{i \in S_1} d(i,o(i)) + 3 \cdot \sum_{i \in N_2} d(i,o(i)) \\
&\leq 7 \cdot \SC(\bo),
\end{align*}
that is, the approximation ratio is at most $7$. 

\medskip

\noindent 
{\bf (Case 1.2): $d(m_2,w_2) > d(m_2,o_2)$.}
First observe that if $w_1 = t(m_1)$ and $w_2 = t(m_2)$, then it would have to be the case that $d(m_2,w_2) \leq d(m_2,o_2)$, which is already captured by (Case 1.1). So, it must be the case that $w_1 = t(m_1) = t(m_2)$ and $w_2 = s(m_2)$, which, in combination with the fact that $d(m_2,w_2) > d(m_2,o_2)$, implies that $w_1 = o_2$. 

Since $m_1$ is closer to $w_1=o_2$ than to $o_1$, the same must be true for at least half of the agents in $N_1 \setminus N_2$, which implies the following lower bound on the optimal social cost:
\begin{align*}
    \SC(\bo) \geq \frac{|N_1 \setminus N_2|}{2} \cdot \frac{d(o_1,o_2)}{2} = \frac{|N_1 \setminus N_2|}{4} \cdot d(o_1,o_2).
\end{align*}

Now, Inequalities \eqref{eq:case1.1:N1} and \eqref{eq:case1.1:S1} are still true since $w_1 = t(m_1)$. We will use \eqref{eq:case1.1:N1} in this case as well to bound the contribution of the agents in $N_1 \setminus N_2$ to the social cost of $\bw$, but for the agents in $S_1$ we will use a different bound obtained by applying the triangle inequality once, as follows:
\begin{align} \label{eq:case1.2:S1}
    \sum_{i \in S_1} d(i,w_1) &\leq \sum_{i \in S_1} \bigg( d(i,o_1) + d(o_1,w_1) \bigg) \nonumber \\
    &= \sum_{i \in S_1} d(i,o_1) + |S_1| \cdot d(o_1,o_2). 
\end{align}
We will now bound the contribution of the agents in $(N_2 \setminus N_1) \cup S_2$ by using the triangle inequality, the fact that $d(m_2,w_2) \leq d(m_2,o_1)$ (which is true since $t(m_2)=w_1=o_2$ and $s(m_2)=w_2$), the fact that $(N_2 \setminus N_1) \cup S_2 \subseteq N_2$, and the fact that $m_2$ minimizes the total distance of all agents in $N_2$ from any point of the line, including $o_2$. We have
\begin{align} \label{eq:case1.2:N2}
\sum_{i \in (N_2 \setminus N_1) \cup S_2} d(i,w_2) 
&\leq \sum_{i \in (N_2 \setminus N_1) \cup S_2} \bigg( d(i,m_2) + d(m_2,w_2) \bigg) \nonumber\\
&\leq \sum_{i \in (N_2 \setminus N_1) \cup S_2} \bigg( d(i,m_2) + d(m_2,o_1) \bigg) \nonumber\\
&\leq \sum_{i \in (N_2 \setminus N_1) \cup S_2} \bigg( d(i,m_2) + d(m_2,o_2) + d(o_1,o_2) \bigg) \nonumber\\
&\leq 2 \cdot \sum_{i \in N_2} d(i,m_2) +  \sum_{i \in (N_2 \setminus N_1) \cup S_2} d(i,o_2) + \bigg(|N_2 \setminus N_1|+ |S_2|\bigg) \cdot d(o_1,o_2) \nonumber\\
&\leq 2 \cdot \sum_{i \in N_2} d(i,o(i)) + \sum_{i \in (N_2 \setminus N_1) \cup S_2} d(i,o_2) + \bigg(|N_2 \setminus N_1|+ |S_2|\bigg) \cdot d(o_1,o_2).
\end{align}
Combining Inequalities \eqref{eq:case1.1:N1}, \eqref{eq:case1.2:S1} and \eqref{eq:case1.2:N2}, the fact that $(N_2 \setminus N_1) \cup S_2 \cup S_1 = N_2$, the fact that $|N_2| \leq 2 \cdot |N_1 \setminus N_2|$ (which is true since $|N_1 \setminus N_2| \geq |N_1 \cap N_2|$ and $|N_1 \setminus N_2| \geq |N_2 \setminus N_1|$), and the lower bound on $\SC(\bo)$, we finally have:
\begin{align*}
    \SC(\bw) 
    &\leq 3 \cdot \SC(\bo) + |N_2| \cdot d(o_1,o_2) \\
    &\leq 3 \cdot \SC(\bo) + 2\cdot |N_1 \setminus N_2| \cdot d(o_1,o_2) \\
    &\leq 11 \cdot \SC(\bo),  
\end{align*}
that is, the approximation ratio is at most $11$.

\medskip 

\noindent 
{\bf (Case 2)}
Here, we have that $w_1 = t(m_{12})$ and $w_2 = s(m_{12})$. Since the decision does not depend on the positions of the agents in $(N_1 \setminus N_2) \cup (N_2 \setminus N_1)$, these agents can be moved anywhere without affecting the outcome of the mechanism. Hence, in a worst-case instance, to maximize the approximation ratio, all these agent can be positioned at exactly the optimal locations of the two facilities, namely, all agents of $N_1 \setminus N_2$ are positioned exactly at $o_1$ and all agents of $N_2 \setminus N_1$ are positioned exactly at $o_2$. 

Since $|N_1 \setminus N_2| \leq |N_1 \cap N_2|$, in a worst-case instance, the contribution of the agents of $N_1 \setminus N_2$ to the social cost of the solution $\bw$ computed by the mechanism is
\begin{align*}
    \sum_{i \in N_1 \setminus N_2} d(i,w_1) 
    = \sum_{i \in N_1 \setminus N_2} d(o_1,w_1) 
    \leq \sum_{i \in N_1 \cap N_2} d(o_1,w_1) 
\end{align*}
By the triangle inequality, we further have that
\begin{align}
    \sum_{i \in N_1 \setminus N_2} d(i,w_1) 
    &\leq \sum_{i \in N_1 \cap N_2} \bigg( d(i,o_1) + d(i,w_1) \bigg) \nonumber \\
    &\leq \sum_{i \in N_1 \cap N_2} \bigg( d(i,o(i)) + d(i,w(i)) \bigg), \label{eq:case2:N1agents}
\end{align}
where $w(i)$ is the location of the facility that determines the cost of agent $i$ in $\bw$. 

Since $|N_2| \leq |N_1|$, we also have that $|N_2 \setminus N_1| \leq |N_1 \setminus N_2| \leq |N_1 \cap N_2|$. Consequently, similarly to above, in a worst-case instance, the contribution of the agents of $N_2 \setminus N_1$ to the social cost of $\bw$ can be upper-bounded as follows:
\begin{align}
    \sum_{i \in N_2 \setminus N_1} d(i,w_2) 
    &\leq \sum_{i \in N_1 \cap N_2} \bigg( d(i,o(i)) + d(i,w(i)) \bigg) \label{eq:case2:N2agents}
\end{align}

The social cost of $\bw$ in a worst-case instance can now be written as follows:
\begin{align}
\SC(\bw) 
&= \sum_{i \in N_1 \setminus N_2} d(i,w_1) + \sum_{i \in N_2 \setminus N_1} d(i,w_2) + \sum_{i \in N_1 \cap N_2} d(i,w(i)) \nonumber \\
&\leq 2 \cdot \sum_{i \in N_1 \cap N_2} d(i,o(i)) + 3 \cdot \sum_{i \in N_1 \cap N_2} d(i,w(i)). \label{eq:case2:SC}
\end{align}
For any agent $i \in N_1 \cap N_2$, by the triangle inequality, the fact that $d(m_{12},w_1) \leq d(m_{12},w_2) \leq d(m_{12},o(m_{12}))$, and the fact that $d(i,o(m_{12})) \leq d(i,o(i))$, we have 
\begin{align*}
    d(i,w(i)) &\leq d(i,m_{12}) + d(m_{12},w(i)) \\
    &\leq d(i,m_{12}) + d(m_{12},o(m_{12})) \\
    &\leq 2 d(i,m_{12}) + d(i,o(m_{12})) \\
    &\leq 2 d(i,m_{12}) + d(i,o(i)) 
\end{align*}
Hence, 
\begin{align*}
\sum_{i \in N_1 \cap N_2} d(i,w(i)) \leq 2 \cdot \sum_{i \in N_1 \cap N_2} d(i,m_{12}) + \sum_{i \in N_1 \cap N_2} 
 d(i,o(i)).
\end{align*}
Since $m_{12}$ minimizes the total distance of the agents in $N_1 \cap N_2$ from any point of the line, including $o_1$, and the fact that $d(i,o_1) \leq d(i,o(i))$ for any agent $i \in N_1 \cap N_2$, we have
\begin{align}
\sum_{i \in N_1 \cap N_2} d(i,w(i)) 
&\leq 2 \cdot \sum_{i \in N_1 \cap N_2} d(i,o_1) + \sum_{i \in N_1 \cap N_2} 
 d(i,o(i)) \nonumber \\
&\leq 3 \cdot \sum_{i \in N_1 \cap N_2} d(i,o(i)). \label{eq:case2:N12}
\end{align}
Replacing \eqref{eq:case2:N12} to \eqref{eq:case2:SC} and using the obvious fact that $N_1 \cap N_2 \subseteq N$, we finally obtain that
\begin{align*}
    \SC(\bw) \leq 11 \cdot \sum_{i \in N_1 \cap N_2} d(i,o(i)) \leq 11 \cdot \SC(\bo),
\end{align*}
that is, the approximation ratio is at most $11$ in this case, completing the proof. 
\end{proof}

We also show that the bound of $11$ on the approximation ratio of the mechanism in terms of the social cost is tight with the following example. 

\begin{example}
Let $\varepsilon>0$ be an infinitesimal and consider an instance with the following four candidate locations: $\{0, \varepsilon, 1, 1+\varepsilon\}$. 
There are $n/3$ agents that approve $F_1$ positioned at $0$, $n/3$ agents that approve $F_2$ positioned at $0$, $n/6$ agents that approve both facilities positioned at $0$, and another $n/6+1$ agents that approve both facilities all positioned $1/2+2\varepsilon$. Clearly, this instance is captured by (Case 2), and thus the mechanism places the two facilities at the candidate locations closest to the median of $N_1 \cap N_2$ who is positioned at $1$. Hence, $w_1 = 1$ and $w_2 = 1+\varepsilon$, for a social cost of approximately $2n/3 + n/6 + n/12 = 11n/12$. However, the optimal solution is to place the two facilities at $0$ and $\varepsilon$ for a social cost of approximately $n/12$, leading to an approximation ratio of $11$. \hfill $\qed$
\end{example}

We now turn our attention to the max cost objective. 

\begin{theorem}\label{thm:MC}
For the maximum cost, the approximation ratio of {\sc Conditional-Median} is at most $5$. 
\end{theorem}

\begin{proof}
Let $i$ be the agent with the maximum individual cost over all agents for the solution $\bw = (w_1,w_2)$ computed by the mechanism. Also, let $\bo = (o_1,o_2)$ be an optimal solution.
We consider the following two cases depending on which facility determines the individual cost of $i$. 

\medskip
\noindent
{\bf Case (a): The individual cost of $i$ in $\bw$ is determined by $F_1$.} \ \\
Clearly, since the cost of $i$ is determined by $F_1$, $i \in N_1$, and thus $d(i,o_1) \leq d(i,o(i))$. Let $m$ be the agent that determines the location of $F_1$; so, $m$ is either $m_1$ in (Case 1) or $m_{12}$ in (Case 2). Observe that $w_1$ is the candidate location closest to $m$, and thus $d(m,w_1) \leq d(m,o(m))$. Also, since $m \in N_1$, $d(m,o_1) \leq d(m,o(m))$. 
Hence, by the triangle inequality, we have that
\begin{align*}
    \MC(\bw) = d(i,w_1) 
    &\leq d(i,o_1) + d(m,o_1) + d(m,w_1) \\
    &\leq d(i,o(i)) + 2 \cdot d(m, o(m))\\
    &\leq 3 \cdot \MC(\bo).
\end{align*}

\medskip
\noindent
{\bf Case (b): The individual cost of $i$ in $\bw$ is determined by $F_2$.} \ \\
If we are in (Case 1) and $w_2 = t(m_2)$, then we can get an upper bound of $3$ similarly to Case (a). We can also derive an upper bound of $3$ if we are in (Case 2), where $w_2 = s(m_{12})$, since this implies that for some of $o \in \{o_1,o_2\}$, $d(m_{12},w_2) \leq d(m,o) \leq d(m,o(m))$; using this and the fact that $d(m,o_2) \leq d(m,o(m))$, we can apply the triangle inequality as in Case (a) to get the bound of $3$. 

So, let us now assume that we are in (Case 1) and $w_2 = s(m_2)$, which implies that $t(m_2) = w_1 = t(m_1)$. 
Clearly, if $m_2$ is (weakly) closer to $w_2$ than to $o_2$, then the same argument as in Case (a) can again lead to an upper bound of $3$. So, we can further assume that $t(m_2) = w_1 = t(m_1) = o_2$. Due to this, $m_2$ is (weakly) closer to $w_2 = s(m_2)$ than to $o_1$. Therefore, using repeatedly the triangle inequality and this relation between these points, we obtain
\begin{align*}
    \MC(\bw) = d(i,w_2) 
        &\leq d(i,o_2) + d(m_2,o_2) + d(m_2,w_2) \\
        &\leq d(i,o(i)) + d(m_2,o(m_2)) + d(m_2,o_1) \\
        &\leq 2 \cdot \MC(\bo) + d(m_2,o_1) \\
        &\leq 2 \cdot \MC(\bo) + d(m_2,o_2) + d(m_1,o_2) + d(m_1,o_1) \\
        &\leq 2 \cdot \MC(\bo) + d(m_2,o(m_2)) + d(m_1,t(m_1)) + d(m_1,o(m_1)) \\
        &\leq 5 \cdot \MC(\bo), 
\end{align*}
and the proof is complete.
\end{proof}

We also show that the bound of $5$ shown in the above theorem is tight. 

\begin{example}
Let $\varepsilon>0$ be an infinitesimal and consider an instance with the following three candidate locations: $\{0, 2, 6\}$. 
There are three agents that approve $F_1$ all positioned at $1+\varepsilon$, one agent that approves $F_2$ positioned at $1$, and another two agents that approve $F_2$ positioned at $3+\varepsilon$. This instance is captured by (Case 1), and thus the mechanism places $F_1$ at $w_1 = t(m_1) = 2$, and since $t(m_2)=2$ as well, $F_2$ is placed at $w_2 = s(m_2) = 6$. The max cost of this solution is $5$, realized by the agent that approves $F_2$ positioned at $1$. However, the optimal solution is to place $F_1$ at $0$ and $F_2$ at $2$ for a max cost of approximately $1$, leading to an approximation ratio of $5$.
\hfill $\qed$
\end{example}

\section{A connection between the max- and the sum-variant} \label{sec:connection}
In this short section, we discuss an interesting connection between the max-variant heterogeneous two-facility location problem that is the focus of this paper and the {\em sum-variant} that has been studied by \citet{kanellopoulos2023}. In the sum-variant, the individual cost of an agent $i$ for a feasible solution $\by$ is her distance to all the facilities she approves, in particular, 
\begin{align*}
    \cost_i^\text{sum}(\by) = \sum_{j \in [2]} p_{ij}\cdot d(i,y_j).
\end{align*}
In contrast, recall that, in the max-variant, the individual cost of an agent $i$ is her distance to the farthest facility among the ones she approves, i.e., 
\begin{align*}
    \cost_i^\text{max}(\by) = \max_{j: p_{ij}=1} d(i,y_j).
\end{align*}
It is not hard to observe that $\frac12 \cdot \cost_i^\text{sum}(\by) \leq \cost_i^\text{max}(\by) \leq \cost_i^\text{sum}(\by) \leq 2 \cdot \cost_i^\text{max}(\by)$ for any agent $i$, which implies the following statement. 

\begin{theorem}\label{thm:connection}
Let $M$ be any mechanism that is strategyproof for the max-variant and the sum-variant. 
If the approximation ratio of $M$ in terms of the social cost or the maximum cost is at most $\rho \geq 1$ in one variant, then it is at most $2\cdot \rho$ in the other variant. 
\end{theorem}

\begin{proof}
Let $\bw$ be the solution computed by $M$. Also, let $\bo^\text{max}$ and $\bo^\text{sum}$ be the optimal solutions in terms of the objective $f \in \{\SC,\MC\}$ for the max- and the sum-variant, respectively. For convenience, we write $f^\text{max}(\by)$ and $f^\text{sum}(\by)$ for the objective value of solution $\by$ in terms of $f$ in the max- and the sum-variant, respectively.
Due to the relation between the individual costs of the agents in the two variants mentioned above and the fact that $f$ is either the sum of the costs or the cost of some agent, we have that $f^\text{sum}(\bw) \leq 2\cdot f^\text{max}(\bw)$ and $f^\text{sum}(\bo^\text{sum}) \geq f^\text{max}(\bo^\text{sum})$. Combined with the fact that $f^\text{max}(\bo^\text{sum}) \geq f^\text{max}(\bo^\text{max})$, we obtain
\begin{align*}
    \frac{f^\text{sum}(\bw)}{f^\text{sum}(\bo^\text{sum})} \leq \frac{2\cdot f^\text{max}(\bw)}{f^\text{max}(\bo^\text{sum})}
    \leq 2\cdot \frac{f^\text{max}(\bw)}{f^\text{max}(\bo^\text{max})} \leq 2\cdot \rho.
\end{align*}
Similarly, we have that $f^\text{max}(\bw) \leq f^\text{sum}(\bw)$ and $f^\text{max}(\bo^\text{max}) \geq \frac12 \cdot f^\text{sum}(\bo^\text{max})$. Combined with the fact that $f^\text{sum}(\bo^\text{max}) \geq f^\text{sum}(\bo^\text{sum})$, we obtain
\begin{align*}
    \frac{f^\text{max}(\bw)}{f^\text{max}(\bo^\text{max})} \leq \frac{f^\text{sum}(\bw)}{\frac12 \cdot f^\text{sum}(\bo^\text{max})}
    \leq 2\cdot \frac{f^\text{sum}(\bw)}{f^\text{sum}(\bo^\text{sum})} \leq 2\cdot \rho,
\end{align*}
as desired.
\end{proof}

It is not hard to observe that {\sc Conditional-Median} is strategyproof not only in the max-variant but also in the sum-variant (in particular, the arguments in the proof of Theorem~\ref{thm:sp} follow through for the sum-variant). 
Consequently, by Theorems~\ref{thm:SC}, \ref{thm:MC} and \ref{thm:connection}, {\sc Conditional-Median} achieves constant approximation ratio in terms of the social cost and the maximum cost in the sum-variant. As already mentioned, this fixes a result of \citet{kanellopoulos2023} who claimed an upper bound of $3$ for the social cost with a mechanism that turned out to not be strategyproof.

\begin{corollary}
In the sum-variant, the approximation ratio of {\sc Conditional-Median} is at most $22$ in terms of the social cost and at most $10$ in terms of the maximum cost. 
\end{corollary}


\section{Open problems} \label{sec:open}
In this paper, we focused on the truthful constrained heterogeneous facility location problem with max-variant cost. We showed that constant bounds of $11$ and $5$ on the approximation ratio in terms of the social cost and maximum cost, respectively, can be achieved {\em simultaneously} by a rather simple and natural strategyproof mechanism, thus improving upon the previous best-known bounds. The most direct open question is whether improved approximations can be simultaneously achieved in terms of these two objectives. Building on this idea (which has recently been considered by \citet{han2023simultaneous} for non-strategic voting and single-facility location), one could also consider other objectives that interpolate between the social cost and the max cost, such as the sum of the $k \in [n]$ max individual agent costs; observe that this objective coincides with the social cost for $k=n$ and with the max cost for $k=1$. Another direction would be to consider different assumptions, such as when both the positions and the preferences of the agents are private. 

\bibliographystyle{plainnat}
\bibliography{max-FL-references}

\begin{thebibliography}{16}
\providecommand{\natexlab}[1]{#1}
\providecommand{\url}[1]{\texttt{#1}}
\expandafter\ifx\csname urlstyle\endcsname\relax
  \providecommand{\doi}[1]{doi: #1}\else
  \providecommand{\doi}{doi: \begingroup \urlstyle{rm}\Url}\fi

\bibitem[Chan et~al.(2021)Chan, Filos{-}Ratsikas, Li, Li, and Wang]{fl-survey}
Hau Chan, Aris Filos{-}Ratsikas, Bo~Li, Minming Li, and Chenhao Wang.
\newblock Mechanism design for facility location problems: {A} survey.
\newblock In \emph{Proceedings of the 30th International Joint Conference on
  Artificial Intelligence {(IJCAI)}}, pages 4356--4365, 2021.

\bibitem[Chen et~al.(2020)Chen, Fong, Li, Wang, Yuan, and Zhang]{chen2020max}
Zhihuai Chen, Ken C.~K. Fong, Minming Li, Kai Wang, Hongning Yuan, and Yong
  Zhang.
\newblock Facility location games with optional preference.
\newblock \emph{Theoretical Computer Science}, 847:\penalty0 185--197, 2020.

\bibitem[Cheng et~al.(2013)Cheng, Yua, and Zhang]{cheng2013obnoxious}
Yukun Cheng, Wei Yua, and Guochuan Zhang.
\newblock Strategy-proof approximation mechanisms for an obnoxious facility
  game on networks.
\newblock \emph{Theoretical Computer Science}, 497:\penalty0 154--163, 2013.

\bibitem[Deligkas et~al.(2023)Deligkas, Filos{-}Ratsikas, and
  Voudouris]{deligkas2023limited}
Argyrios Deligkas, Aris Filos{-}Ratsikas, and Alexandros~A. Voudouris.
\newblock Heterogeneous facility location with limited resources.
\newblock \emph{Games and Economic Behavior}, 139:\penalty0 200--215, 2023.

\bibitem[Feigenbaum and Sethuraman(2015)]{feigenbaum2015hybrid}
Itai Feigenbaum and Jay Sethuraman.
\newblock Strategyproof mechanisms for one-dimensional hybrid and obnoxious
  facility location models.
\newblock In \emph{{AAAI} Workshop on Incentive and Trust in E-Communities},
  volume {WS-15-08}, 2015.

\bibitem[Feldman et~al.(2016)Feldman, Fiat, and Golomb]{feldman2016voting}
Michal Feldman, Amos Fiat, and Iddan Golomb.
\newblock On voting and facility location.
\newblock In \emph{Proceedings of the 2016 {ACM} Conference on Economics and
  Computation ({EC})}, pages 269--286, 2016.

\bibitem[Fotakis and Tzamos(2014)]{fotakis2014two}
Dimitris Fotakis and Christos Tzamos.
\newblock On the power of deterministic mechanisms for facility location games.
\newblock \emph{{ACM} Transactions on Economics and Computation}, 2\penalty0
  (4):\penalty0 15:1--15:37, 2014.

\bibitem[Han et~al.(2023)Han, Jerrett, and Anshelevich]{han2023simultaneous}
Yue Han, Christopher Jerrett, and Elliot Anshelevich.
\newblock Optimizing multiple simultaneous objectives for voting and facility
  location.
\newblock In \emph{Proceedings of the 37th {AAAI} Conference on Artificial
  Intelligence ({AAAI})}, pages 5665--5672, 2023.

\bibitem[Kanellopoulos et~al.(2023{\natexlab{a}})Kanellopoulos, Voudouris, and
  Zhang]{kanellopoulos2021discrete}
Panagiotis Kanellopoulos, Alexandros~A. Voudouris, and Rongsen Zhang.
\newblock On discrete truthful heterogeneous two-facility location.
\newblock \emph{SIAM Journal on Discrete Mathematics}, 37:\penalty0 779--799,
  2023{\natexlab{a}}.

\bibitem[Kanellopoulos et~al.(2023{\natexlab{b}})Kanellopoulos, Voudouris, and
  Zhang]{kanellopoulos2023}
Panagiotis Kanellopoulos, Alexandros~A. Voudouris, and Rongsen Zhang.
\newblock Truthful two-facility location with candidate locations.
\newblock In \emph{Proceedings of the 16th International Symposium on
  Algorithmic Game Theory ({SAGT})}, 2023{\natexlab{b}}.

\bibitem[Li et~al.(2020)Li, Lu, Yao, and Zhang]{li2020constant}
Minming Li, Pinyan Lu, Yuhao Yao, and Jialin Zhang.
\newblock Strategyproof mechanism for two heterogeneous facilities with
  constant approximation ratio.
\newblock In \emph{Proceedings of the 29th International Joint Conference on
  Artificial Intelligence ({IJCAI})}, pages 238--245, 2020.

\bibitem[Lu et~al.(2010)Lu, Sun, Wang, and Zhu]{Lu2010two-facility}
Pinyan Lu, Xiaorui Sun, Yajun Wang, and Zeyuan~Allen Zhu.
\newblock Asymptotically optimal strategy-proof mechanisms for two-facility
  games.
\newblock In \emph{Proceedings of the 11th {ACM} Conference on Electronic
  Commerce ({EC})}, pages 315--324, 2010.

\bibitem[Procaccia and Tennenholtz(2013)]{procaccia09approximate}
Ariel~D. Procaccia and Moshe Tennenholtz.
\newblock Approximate mechanism design without money.
\newblock \emph{{ACM} Transactions on Economics and Computation}, 1\penalty0
  (4):\penalty0 18:1--18:26, 2013.

\bibitem[Serafino and Ventre(2016)]{serafino2016}
Paolo Serafino and Carmine Ventre.
\newblock Heterogeneous facility location without money.
\newblock \emph{Theoretical Computer Science}, 636:\penalty0 27--46, 2016.

\bibitem[Xu et~al.(2021)Xu, Li, Li, and Duan]{Xu2021minimum}
Xinping Xu, Bo~Li, Minming Li, and Lingjie Duan.
\newblock Two-facility location games with minimum distance requirement.
\newblock \emph{Journal of Artificial Intelligence Research}, 70:\penalty0
  719--756, 2021.

\bibitem[Zhao et~al.(2023)Zhao, Liu, Nong, and Fang]{Zhao2023constrained}
Qi~Zhao, Wenjing Liu, Qingqin Nong, and Qizhi Fang.
\newblock Constrained heterogeneous facility location games with max-variant
  cost.
\newblock \emph{Journal on Combinatorial Optimization}, 45\penalty0
  (3):\penalty0 90, 2023.

\end{thebibliography}

\end{document}